\documentclass[nofootinbib,
tightenlines,
superscriptaddress,11pt]{revtex4}

\usepackage{graphicx}
\usepackage{amsmath,amssymb,amsfonts,amsthm,stmaryrd,mathtools,bm,physics}
\usepackage{color}
\usepackage{tikz}
\usepackage[normalem]{ulem}
\usepackage{braket}
\allowdisplaybreaks[1]
\usepackage[bookmarks,linktocpage, colorlinks=true, plainpages = false, citecolor = treegreen,  linkcolor=darkblue, urlcolor = darkblue, filecolor = blue]{hyperref} 

\definecolor{darkblue}{rgb}{0., 0.4, 0.8}
\definecolor{treegreen}{rgb}{0., 0.7, 0.3}

\newtheorem{lemma}{Lemma}

\newcommand{\eps}{\epsilon}
\newcommand{\del}{\partial}
\newcommand{\Curl}{\mathcal{C}}
\newcommand{\D}{\mathcal{D}}
\newcommand{\Lag}{\mathcal{L}}
\let\Box\square

\def\be#1\ee{\begin{align}#1\end{align}}

\def\ba{\begin{eqnarray}}
\def\ea{\end{eqnarray}}
\def\nn{\nonumber}
\def\q{\quad}

\begin{document}

\title{Gravitational wave signatures from area metric gravity}
\author{Bianca Dittrich}
\email{bdittrichATperimeterinstitute.ca} 
\affiliation{Perimeter Institute, 31 Caroline Street North, Waterloo, ON, N2L 2Y5, Canada}

\begin{abstract}
\noindent
Area metric theories have been proposed as effective descriptions in a number of quantum gravity approaches: as continuum limit of effective spin foams, as an effective description for string theory and in the context of holography. 
Area metric actions allow for non-topological parity violating terms, which opens the possibility of observationally determining the Barbero-Immirzi parameter.  We consider the shift symmetric version of linearized area metric theory in Lorentzian signature, which arises in particular from spin foams and modified Plebanski theory, and  avoids ghost and tachyonic instabilities in the graviton sector.   We discuss the gravitational wave like solutions of this theory and through an analysis of the coupling to electro-magnetism, deduce the signals in gravitational wave detectors. Using polarized light in such interferometers allows in principle to measure area metric induced birefringence and to determine the Barbero-Immirzi parameter. Assuming Planck massive non-length excitations in the area metric the resulting signal is however extremely weak.
\end{abstract}

\maketitle

\section{Introduction}\label{sec:intro}

A key open question in quantum gravity is what are the fundamental gravitational degrees of freedom. Whereas different approaches offer different answers for the deep ultraviolet regime \cite{deBoer:2022zka,Buoninfante:2024yth}, \emph{area metrics} are proposed across a number of approaches to describe the dynamics at intermediate scales between the deep ultraviolet and the infrared regime. 
Area-metric theories have been suggested to capture phenomenological aspects of string theory~\cite{Schuller:2005yt,Punzi:2006hy,Ho:2015cza,Borissova:2024cpx}, AdS/CFT holography \cite{Bhattacharya:2025hag} and, in particular, loop quantum gravity and spin foams \cite{Dittrich:2021kzs,Dittrich:2022yoo,Borissova:2022clg,BDK,Dittrich:2023ava}.

An area metric is a metric on the space of bivectors: it assigns areas to infinitesimal parallelograms and
dihedral angles between them. It is a rank-$4$ tensor with the same symmetries as the Riemann curvature tensor, and therefore has in four spacetime dimensions 20 independent components.~\footnote{A further generalization to so-called acyclic area metrics has 21 independent components~\cite{Schuller:2005yt}.} A given length metric $g$  induces an area metric $G$ via $
G_{\mu\nu\rho \sigma} = g_{\mu\rho}g_{\nu\sigma} - g_{\mu\sigma}g_{\nu\rho}$. In four dimensions area metrics are a genuine  enlargement of the configuration space of gravity due to the 10 additional degrees of freedom.  Thus the 20 area metric degrees of freedom can be split into a length induced part and a non-length part. On a flat spacetime background the 10 non-length degrees of freedom amount to two massive spin-2 fields, one being self-dual (or chiral) and the other being anti-self-dual (or anti-chiral).

Using an expansion of area metrics on flat spacetime, the works  \cite{Alex:2019ari,BDK} classified the space of diffeomorphism-invariant actions quadratic in area-metric fluctuations and at most second order in derivatives.  There are four additional coupling constants (modulo absorption of couplings into field rescalings), two couplings characterize the coupling of the chiral and the anti-chiral fields to the length metric, and another two providing the masses for the chiral and anti-chiral fields. 

Setting the masses to be equal and imposing a so-called shift-symmetry condition on the remaining two couplings singles out a family of quadratic actions, which avoids an extra spin-2 pole and leads to a ghost-free propagator for the effective action for the length metric perturbations. Within this class of couplings the length-metric sector remains exactly massless and without birefringence while the non-length degrees of freedom acquire a mass $m$. The second coupling parameter, denoted $\xi$ below,
controls the strength of parity violation. It is this family that we analyze in the
present paper.\footnote{In due course we will see that, assuming $m$ to be of Planck scale, small deviations from shift-symmetry will not alter the phenomenology substantially, whereas large deviations are largely ruled out observationally.}


The presence of a parity-violating coupling in the quadratic action is a genuinely \emph{area-metric}
phenomenon. For a length metric, no parity-violating term exists at quadratic order in
the fluctuations and second order in derivatives: the available parity-odd invariants
are topological or total derivatives, so the graviton dynamics of general relativity is
necessarily parity-even. Parity violating effects in gravity have been considered within the framework of higher derivative theories, e.g. Chern-Simons gravity \cite{Jackiw:2003pm,Alexander:2009tp,Crisostomi:2017ugk}, where the parity breaking parameter is promoted to a field; via coupling to fermions \cite{Perez:2005pm,Freidel:2005sn}; or other non-Riemannian extensions of gravity, see \cite{Qiao:2022mln} for a comprehensive review.  A key difference between the shift-symmetric area metric theories and parity violating theories discussed in \cite{Qiao:2022mln}, is that in the former case the propagation of gravitational waves (in the linearized theory) is \emph{not} modified from linearized general relativity, whereas the theories discussed in \cite{Qiao:2022mln} lead to velocity and amplitude birefringence for gravitational waves.  We will however see that within shift symmetric area metric dynamics a gravitational wave is dressed with non-length components, which lead, via a coupling to electro-magnetism,  to birefringence for light.

The inverse of the Barbero-Immirzi parameter \cite{Barbero,Immirzi:1997}, which plays a key role in loop quantum gravity and spin foams \cite{ Rovelli:1997na,Dittrich:2012rj,Meissner:2004ju,BarberoG:2022ixy}, multiplies the quasi-topological and parity-violating Holst term \cite{Holst:1995pc} in first order formulations of gravity. The quasi-topological property  ensures, that the parameter does not appear in the classical equations of motion.  But as we will review in Section \ref{Sec:SF}, there is ample evidence that the configuration space in loop quantum gravity and spin foams is rather given by area metrics than length metrics. Furthermore, an area metric action can be derived from the continuum limit of effective spin foams \cite{Dittrich:2022yoo,AsanteToAppear}, and it does satisfy the shift symmetry condition.  The Barbero-Immirzi parameter can therefore be identified with the parity violating parameter $\xi$ in the area metric actions discussed here.  One outcome of this
paper is that $\xi$ is \emph{in principle observable}: it rotates the axes of the
birefringence which the non-length degrees of freedom imprint on light, by the angle
$\psi=-\tfrac12\arctan(\tanh\xi)$ relative to the polarization axes of a gravitational
wave, cf.\ Eq.~\eqref{eq:gravbiref}.

Whereas previous work performed a canonical analysis of linearized, shift-symmetric area metric actions \cite{BDK},
 we develop in this paper the Lagrangian analysis and construct plane wave solutions in Section \ref{sec:action} and \ref{sec:eom}. This leads to a very transparent encoding of the dynamics: we find  an exactly massless
graviton branch, whose excitations are dressed by non-length components at relative
order $(k_{\text{grav}}/m)^2$, and a massive branch, that carries no length-metric perturbation at
all. We discuss the energetics of the two branches, which is governed by the indefinite
signature of the non-length sector. We then couple the theory to electromagnetism (Section \ref{sec:EM}))
through the constitutive (pre-metric) form of the Maxwell action \cite{Obukhov:2000nw,Hehl:2004zz}, for which the area
metric is the natural geometric input, and derive the optical response: the length
perturbation produces the familiar polarization-blind time delay, while the non-length
perturbations render the vacuum birefringent. This cleanly separates two observational
channels in a laser interferometer --- the standard strain channel, which remains
unmodified at linear order, and a polarization channel, which vanishes identically in
general relativity and therefore constitutes a null test of non-length geometry, with
the parity parameter appearing as the rotation of the birefringence axes.



\section{Conventions}

We work on a flat background in Lorentzian signature with the mostly-plus metric
\begin{equation}
(\eta_{\mu\nu}) = \mathrm{diag}(-1,+1,+1,+1),
\qquad x^0=t, \qquad
\Box \equiv \eta^{\mu\nu}\del_\mu\del_\nu = -\del_t^2+\nabla^2 .
\end{equation}
We denote space-time indices by greek letters $\mu,\nu,\ldots = 0,\ldots,3$, which are raised/lowered with $\eta$. For the restriction to spatial indices we use latin letters
$a,b,\ldots = 1,2,3$, which are raised and lowered with $\delta^{ab}$ and $\delta_{ab}$. The Levi-Civita
symbol is fixed by
\begin{equation}
\eps^{0123}=+1 \qquad\Longrightarrow\qquad \eps_{0123}=-1 ,
\end{equation}
and $\eps_{abc}$ denotes the three-dimensional symbol with $\eps_{123}=+1$.

The (anti-)self-dual Plebanski two-forms evaluated on a Minkowski background are
\begin{equation}\label{eq:Sigmadef}
\Sigma^{\pm i}{}_{\mu\nu}
= \pm \imath\left(\delta^0_\mu\delta^i_\nu-\delta^0_\nu\delta^i_\mu\right)
+ \eps^{i}{}_{jk}\,\delta^j_\mu\delta^k_\nu ,
\end{equation}
Here $i,j,\ldots =1,2,3$ are internal indices, which are raised and lowered with $\delta^{ij}$ and $\delta_{ij}$. We will identify spatial and internal indices.


The Plebanski two-forms satisfy the Lorentzian self-duality condition and the completeness relation
\begin{equation}\label{eq:selfdual}
\tfrac12\,\eps_{\mu\nu}{}^{\lambda\tau}\,\Sigma^{\pm a}{}_{\lambda\tau}
= \pm \imath\,\Sigma^{\pm a}{}_{\mu\nu},
\qquad
\sum_c \Sigma^{\pm }{}_{\!c\, \mu\nu}\Sigma^{\pm c}{}_{\rho\sigma}
= \eta_{\mu\rho}\eta_{\nu\sigma}-\eta_{\mu\sigma}\eta_{\nu\rho}
\mp \imath \,\eps_{\mu\nu\rho\sigma},
\end{equation}
and the reality property $(\Sigma^{+a}{}_{\mu\nu})^* = \Sigma^{-a}{}_{\mu\nu}$.

We will also use the symbols 
\ba
\Box=-\partial_t^2 +\nabla^2 \,, \q \nabla^2=\del^a\del_a  \, .
\ea

\section{Area metric dynamics from modified Plebanski gravity and spin foams}\label{Sec:SF}

Here we will review evidence that area metrics do describe spin foam dynamics \cite{PerezLR}. This can be argued on the microscopic \cite{Dittrich:2023ava} level, from the perturbative continuum limit of effective spin foams \cite{Dittrich:2021kzs,Dittrich:2022yoo,AsanteToAppear}, and from modified Plebanski gravity \cite{Borissova:2022clg}, which underlies spin foam construction. 

To start with consider the (non-chiral) Plebanski action \cite{Plebanski:1977zz} with a Holst term \cite{Holst:1995pc}
\ba\label{PA1}
S&=& \int     \eta_{IJKL}   B^{IJ} \wedge F^{KL}(\omega) \,-\, \frac{1}{2\gamma} \epsilon_{IJKL} B^{IJ} \wedge F^{KL}(\omega)\,-\,\tfrac{1}{2} \phi_{IJKL}B^{IJ}\wedge B^{KL} \q ,
\ea
where we have an $\text{so}(3,1)$-valued two-form $B^{IJ}_{\mu\nu}$ and an $\text{SO}(3,1)$ connection $\omega^{IJ}_\mu$ with curvature $F^{KL}(\omega)$ on a four-dimensional manifold. Here $I,J=0,1,\ldots,3$ and $(IJ)$ are anti-symmetric index pairs for the six dimensional Lie algebra.  There are two invariant bi-linear forms on $\text{so}(3,1)$, namely $\eta_{IJKL}:=\tfrac{1}{2} (\eta_{IK}\eta_{JL}-\eta_{IL}\eta_{JK})$, with $\eta_{IJ}$ the Minkowski metric and $\tfrac{1}{2} \epsilon^{IJKL}$, the Levi-Civita symbol.  $\gamma$ is the Barbero-Immirzi parameter, its inverse multiplies the so-called Holst term. 
 The last term imposes the (primary) simplicity constraints on $B$, via Lagrange multiplier fields $\phi_{IJKL}$, for which we impose the symmetries $\phi_{IJKL}=-\phi_{JIKL}=-\phi_{IJLK}=\phi_{KLIJ}$ and $\phi_{IJKL}\epsilon^{IJKL}=0$. Note that these are again the same algebraic symmetries as for the Riemann tensor, we therefore have 20 independent Lagrange multipliers and thus 20 constraints on $B^{IJ}$. The $B^{IJ}$ two-forms have 36 independent components, but (\ref{PA1}) is invariant under 
$\text{SO}(3,1)$ gauge transformations, leaving 30 gauge invariant quantities.  Imposing the 20 simplicity constraints leaves the 10 degrees of freedom for a length metric. 

The first two terms in the action (\ref{PA1}) define topological BF theory. This theory has an additional gauge symmetry, which amounts to a shift-symmetry for the $B$-fields, removing all propagating degrees of freedom.  (The shift symmetry condition for the area metric couplings ensures that a remnant of this symmetry holds for the kinematical part of the action.) The simplicity constraints break this symmetry, and are therefore essential for turning  topological BF theory into gravity with propagating degrees of freedom. 

The modified Plebanski formalism introduced by Krasnov \cite{Krasnov:2006du,Krasnov:2008fm,Krasnov:2009iy,Speziale:2010cf} replaces the (primary) simplicity constraints with potential terms for the $B$-fields, and in particular mass terms. Application to a chiral formulation of Plebanski (where one works only with the self-dual part of the $B$-fields) revealed that the resulting modification to gravity amounts to a mild deformation of general relativity, with just the massless graviton propagating, and no extra poles \cite{Freidel:2008ku,Krasnov:2009ik}. The mechanism is the same as the one which prevents an extra pole for the area metric action with shift-symmetric couplings, which we will encounter later.  However, for non-chiral Plebanski, replacing all the simplicity constraints with mass terms, one finds eight propagating degrees of freedom (a massless graviton, a massive spin two field, and a scalar), with an unstable behaviour at least on flat backgrounds \cite{Alexandrov:2008fs,Speziale:2010cf}.

Motivated by the appearance of area metric dynamics in the perturbative continuum limit of spin foams \cite{Dittrich:2021kzs,Dittrich:2022yoo} (see below), the work \cite{Borissova:2022clg} only replaces 10 of the simplicity constraints by a potential and keeps the remaining 10 as exact constraints. These 10 constraints are chosen such that the remaining 20 degrees of freedom of the $B$-fields can be used to define an area metric. Integrating out the connection one can thus construct an action for area metrics.  This has been done for the linearized action in \cite{Borissova:2022clg} and resulted in the linearized area metric action with shift symmetric couplings, displayed in (\ref{eq:LEH}) below.~\footnote{To arrive at (\ref{eq:LEH}) one changes to a canonical normalization for the $\chi^\pm$ fields: $\chi^\pm\rightarrow  \chi^\pm/\sqrt{\gamma_\pm}$, with $\gamma_\pm=1\pm \imath \gamma^{-1}$. One thus absorbs the coupling in front of the $(\partial_\mu\chi^\pm)^2$ term. The relation between the Barbero-Immirzi parameter and $\alpha_\pm$ is then $\alpha_\pm=\sqrt{\gamma_\pm}$, and $\sinh(2\xi)=\gamma^{-1}$ for the parity breaking parameter $\xi$ introduced below.}
  Integrating out the non-length degrees of freedom from the area metric, \cite{Borissova:2022clg} found the same mild deformation for linearized gravity, that is no extra poles, as for chiral version of modified Plebanski theory. 

In contrast, the work \cite{BDK} (see also \cite{Alex:2019ari}) started from area metric actions directly, and constructed the most general linearized actions satisfying (linearized) diffeomorphism symmetry. This did however only lead to a lifting of conditions on coupling constants already present in  (\ref{eq:LEH}) below: the shift symmetry condition on $\alpha_\pm$ does not need to be satisfied, and the masses for the non-length fields $\chi^\pm$  (which are $m$ for both $\pm$ sectors in (\ref{eq:LEH})) can be different.

The replacement of ``sharp" simplicity constraints by ``weaker" simplicity constraints can be motivated from loop quantum gravity  \cite{Rovellibook,Thiemannbook} and spin foams \cite{PerezLR,Hrytseniak:2026vpt}.  There, the $B$-fields are quantized (after applying time gauge which trivializes the boost sector) as angular momentum operators. These do not commute, which leads to a $\gamma$-dependent anomaly in the commutator algebra of the (primary) simplicity constraints \cite{Dittrich:2008ar,Dittrich:2010ey,Dittrich:2012rj,PerezLR}. Thus part of the simplicity constraints cannot be imposed sharply, and are instead imposed in a weaker form \cite{Engle:2007wy,Freidel:2007py, EffSFE,EffSFL}.  This can be also connected to a well-known feature of loop quantum gravity, namely the discreteness of area eigenvalues \cite{Rovelli:1994ge, Ashtekar:1996eg}. Areas are quantized as Casimir of the angular momentum algebra, and are hence discrete; the Barbero-Immirzi parameter appears as a global factor in the spectrum for spatial areas. A loop quantum spin network state can be characterized by the area eigenvalues its assigns to a (possibly generalized) triangulation dual to the network. A consistent length assignment would impose constraints between the areas. These  constraints amount to diophantine equations and admit too few solutions for semi-classical states \cite{EffSFE}.  The configurations described by loop quantum gravity constitute therefore a generalizations \cite{Dittrich:2008va,Dittrich:2008ar} of discretized length geometries, such as length Regge calculus \cite{Regge:1961px}.  Interestingly, these generalized discrete configurations can be mapped to area metrics \cite{Dittrich:2023ava}, justifying area metrics from a microscopic viewpoint.

The fact that the simplicity constraints cannot be imposed sharply, led to worries that spin foams might not lead to gravitons in a semi-classical limit \cite{Bonzom:2009hw,Hellmann:2012kz}. After all, the simplicity constraints turn the topological BF action into one describing gravity. This worry remained unresolved for over a decade and effective spin foams \cite{EffSFE,EffSFL} were introduced to allow progress. Replacing the gauge formulation of spin foams by a higher gauge formulation \cite{Asante:2019lki},  effective spin foams encode dynamics in a much more transparent way, keeping key principles of spin foams, e.g. discrete area eigenvalues and a weak implementation of part of the simplicity constraints.  This allowed on the one hand for an explicit numerical computation of expectations values for the discrete dynamics, which are consistent with curvature excitations \cite{Asante:2020iwm}. On the other hand, it allowed for a perturbative continuum limit on a regular lattice \cite{Dittrich:2021kzs,Dittrich:2022yoo,AsanteToAppear}. The analysis in \cite{Dittrich:2021kzs,Dittrich:2022yoo,AsanteToAppear} showed that at leading order (in long wave lengths, or equivalently vanishing lattice constant) one does obtain the linearized Einstein-Hilbert action, which describes massless gravitons. There is a Weyl squared correction at the next-to-leading order \cite{Dittrich:2022yoo,AsanteToAppear}, which can be explained by area metric variables introduced as collective variables on the lattice, with the length induced part being massless and the non-length part coming with a mass on the scale of the lattice constant. The resulting action for the area metric variables is consistent with the action derived from the modified Plebanski formalism in \cite{Borissova:2022clg}.

\section{Action }\label{sec:action}

\subsection{Field content, action and reality conditions}

An area metric on a four-dimensional manifold is a rank-four covariant tensor with the
symmetries $G_{\mu\nu\rho\sigma}=-G_{\nu\mu\rho\sigma}=G_{\rho\sigma\mu\nu}$, i.e.\ a
metric on the space of bivectors: it assigns areas to infinitesimal parallelograms and
dihedral angles between them, as a length metric does for lengths of and angles between
vectors. Here we restrict to cyclic area metrics, $G_{\mu[\nu\rho\sigma]}=0$, this allows to reconstruct the area metric from measuring areas only \cite{Borissova:2022clg}.  With this requirement the area metric tensors
carry the same algebraic symmetries as the Riemann curvature tensor and has thus the same number of independent components, namely $20$.  

Every length metric induces an area metric
$G^{\rm ind}_{\mu\nu\rho\sigma}(g)=g_{\mu\rho}g_{\nu\sigma}-g_{\mu\sigma}g_{\nu\rho}$,
but not conversely: in four dimensions area metrics are a genuine  enlargement of the
configuration space of gravity, arising e.g.\ in the perturbative continuum limit of spin foams \cite{Dittrich:2022yoo,AsanteToAppear}, 
from modified Plebanski theory \cite{Borissova:2022clg} and are also proposed as field content to describe  the effective dynamics of string theory \cite{Schuller:2005ru,Schuller:2005yt, Ho:2015cza, Borissova:2024cpx} and in the context of AdS/CFT holography \cite{Bhattacharya:2025hag}.

We will consider linearized area metric dynamics on a 
 Minkowski-induced background.  We therefore expand
\begin{equation}\label{eq:pertexp}
G_{\mu\nu\rho\sigma}
=\eta_{\mu\rho}\eta_{\nu\sigma}-\eta_{\mu\sigma}\eta_{\nu\rho}+a_{\mu\nu\rho\sigma},
\end{equation}
and decompose $a$ into its Lorentz-irreducible pieces, in complete analogy with the
Riemann tensor: the trace parts assemble into a symmetric tensor $h_{\mu\nu}$ ($10$
components) --- the length-metric perturbation --- while the trace-free, Weyl-like parts
$w^{\pm}_{\mu\nu\rho\sigma}$, self-dual and anti-self-dual with respect to the
background in the sense of \eqref{eq:selfdual} above, carry the remaining $10$. 
The
latter can be encoded in two symmetric trace-free matrices of spacetime scalars
$\chi^{\pm}_{ab}$ ($a,b=1,2,3$ internal indices) via
$w^{\pm}_{\mu\nu\rho\sigma}=2P^{\pm ab}_{\mu\nu\rho\sigma}\chi^{\pm}_{ab}$, with
projectors $P^{\pm}$ built from the two-forms \eqref{eq:Sigmadef}:
\begin{equation}\label{eq:Pdef}
P^{\pm ab}_{\mu\nu\rho\sigma}
=\frac18\Big(\Sigma^{\pm a}_{\mu\nu}\Sigma^{\pm b}_{\rho\sigma}
+\Sigma^{\pm b}_{\mu\nu}\Sigma^{\pm a}_{\rho\sigma}\Big)
-\frac1{12}\,\delta^{ab}\,\Sigma^{\pm c}_{\mu\nu}\Sigma^{\pm c}_{\rho\sigma},
\qquad
\chi^{\pm}_{ab}=\tfrac12\,P^{\pm}{}_{ab}{}^{\mu\nu\rho\sigma}\,a_{\mu\nu\rho\sigma},
\end{equation}
normalized (via $\Sigma^{\pm a}_{\mu\nu}\Sigma^{\pm b\,\mu\nu}=4\delta^{ab}$) so that
$P^{\pm ab}_{\mu\nu\rho\sigma}P^{\pm cd\,\mu\nu\rho\sigma}
=\delta^{a(c}\delta^{d)b}-\tfrac13\delta^{ab}\delta^{cd}$ is the identity on symmetric
trace-free internal pairs. For trace-free symmetric $\chi^\pm_{ab}$ the first relation
simplifies to
$w^{\pm}_{\mu\nu\rho\sigma}=\tfrac12\,\Sigma^{\pm a}_{\mu\nu}\Sigma^{\pm b}_{\rho\sigma}\chi^{\pm}_{ab}$.

In Lorentzian
signature the self-duality eigenvalues are $\pm i$, so the $\chi^{\pm}$ are complex. We therefore impose as reality condition on the $\chi$-fields that they have to be
conjugate to one another, see Eq.~\eqref{eq:reality} below. This is automatically satisfied, if the $\chi$-fields are constructed from real area metric perturbations via (\ref{eq:Pdef}).

\cite{BDK}  derived the most general quadratic action (modulo absorption of coupling constants via field redefinitions), which is invariant under (linearized) diffeomorphisms.  Here we will restrict to the case where the masses for the $\chi^+$ and $\chi^-$ fields are the same and following \cite{BDK} absorb the coupling constant in front of the Einstein-Hilbert term and in front of the kinematical terms for $\chi^\pm$ by field redefinitions.  We will thus work with the  action $S=\int d^4x\,\Lag$ where 
\begin{equation}\label{eq:action}
\boxed{\;
\Lag \;=\; \Lag_{\rm EH}
+\frac14\sum_\pm\Big[\,2\alpha_\pm\,
\Sigma^{\pm a\,\mu\nu}\Sigma^{\pm b\,\rho\sigma}\,
h_{\mu\rho}\,\del_\nu\del_\sigma\chi^\pm_{ab}
-(\del_\mu\chi^\pm_{ab})^2 - m^2(\chi^\pm_{ab})^2\Big]\;}
\end{equation}
with
\begin{equation}\label{eq:LEH}
\Lag_{\rm EH}
= -\tfrac12(\del_\rho h_{\mu\nu})^2
+ (\del^\nu h_{\mu\nu})^2
- (\del^\mu h_{\mu\nu})\del^\nu h^\rho_\rho
+ \tfrac12(\del_\mu h^\rho_\rho)^2 \, .
\end{equation}
Reality of the action for real $h_{\mu\nu}$ requires
\begin{equation}\label{eq:reality}
(\chi^{+}_{ab})^{*} = \chi^{-}_{ab},
\qquad
\alpha_\pm = \alpha_1 \pm \imath\,\alpha_2,\quad \alpha_{1,2}\in\mathbb{R},
\end{equation}
so that the $+$ and $-$ terms in \eqref{eq:action} are complex conjugates of one another.

Introducing real fields $\chi_{1\,ab}=\mathrm{Re}\,\chi^+_{ab}$,
$\chi_{2\,ab}=\mathrm{Im}\,\chi^+_{ab}$ we see that the kinetic and mass terms read
$-\tfrac12[(\del\chi_1)^2-(\del\chi_2)^2] - \tfrac{m^2}{2}[\chi_1^2-\chi_2^2]$. Thus,
 half of the non-length sector has  kinetic and mass
terms with the `wrong sign'. This appears only in Lorentzian signature (and not for Euclidean signature) \cite{BDK}, where the $\chi^\pm$ fields are complex.

If the couplings satisfy the  \emph{shift-symmetry  condition}
\begin{equation}\label{eq:shiftcond}
\tfrac12\left(\alpha_+^2+\alpha_-^2\right)=
\alpha_1^2-\alpha_2^2 = 1 
\qquad\Longleftrightarrow\footnote{
Note that one can use a field redefinition $\chi^\pm \mapsto -\chi^\pm$ to change a negative coupling $\alpha_1$ to a positive one.
}\qquad
\alpha_1=\cosh\xi,\;\; \alpha_2=\sinh\xi \, ,\q  \xi \in \mathbb{R}
\end{equation}
the \emph{massless} part of the action (\ref{eq:action}) features a gauge symmetry \cite{Borissova:2022clg,BDK}
\ba\label{eq:shiftsym}
\delta h_{\mu\nu} = \Box\,\zeta_{\mu\nu},\qquad
\delta\chi^{\pm ab} = -\,\alpha_\pm\,\Sigma^{\pm a\,\mu\rho}\Sigma^{\pm b\,\nu\sigma}\del_\rho\del_\sigma\,\zeta_{\mu\nu} \, .
\ea
where we impose TT gauge  conditions on the symmetric tensor $\zeta_{\mu\nu}$, that is $\zeta_{0\mu}=0$, $\del^a\zeta_{ab}=0$, $\zeta^a_a=0$, so that we have 5 independent gauge parameters.

The mass terms break the symmetry while leaving the kinetic term degenerate. We will see that this has profound repercussions for the dynamics --- we will avoid a ghost pole and instabilities in the (linearized) area metric dynamics, which one would a priori expect because of the indefinite signature of the kinetic and mass terms in the action.

\section{Equations of motion and solutions }\label{sec:eom}

\subsection{Equations of motions}

To obtain the equations of motion we vary the action (\ref{eq:action}) with respect to $h_{\mu\nu}$ and $\chi^{\pm}_{ab}$.

Variation of the Einstein-Hilbert part in the action (\ref{eq:action}) leads to the linearized Einstein tensor $G^{\rm lin\,\mu\nu}$:
\begin{equation}
\frac{\delta}{\delta h_{\mu\nu}}\!\int\!\Lag_{\rm EH}
= \Box h^{\mu\nu} - \del^\mu\del_\rho h^{\rho\nu} - \del^\nu\del_\rho h^{\rho\mu}
+ \del^\mu\del^\nu h + \eta^{\mu\nu}\!\left(\del_\rho\del_\sigma h^{\rho\sigma}-\Box h\right)
\;=\; -2\,G^{\rm lin\,\mu\nu},
\end{equation}

Together with the variation of the mixed term we thus obtain
\begin{equation}\label{eq:heq}
\boxed{\;
G^{\rm lin}_{\mu\nu}
= \frac14\sum_\pm \alpha_\pm\,
\mathcal{K}_{\pm\mu\nu}^{ab}
\chi^\pm_{ab}\,  \q \text{with} \q \q  \mathcal{K}_{\pm}^{ab,\mu\nu}=\Sigma^{\pm a\,\mu\rho}\Sigma^{\pm b\,\nu\sigma}\del_\rho\del_\sigma \, .
\;}
\end{equation}

For the variation with respect to $\chi^\pm_{ab}$, remember that these tensors are symmetric and traceless, hence only the symmetric-traceless projection of the variation must vanish. We obtain 
\begin{equation}\label{eq:chieq}
\boxed{\;
\big(\Box - m^2\big)\chi^{\pm ab}
+ \alpha_\pm\Big(\mathcal{K}_\pm^{ab,\mu\nu} -\tfrac13\,\delta^{ab}\,\delta_{ed}  \mathcal{K}_\pm^{ed,\mu\nu} \Big)h_{\mu\nu} = 0 \, . \;}
\end{equation}
The trace can be evaluated with the completeness relation \eqref{eq:selfdual} leading to
\begin{equation}\label{eq:Strace}
\delta_{ed}  \mathcal{K}_\pm^{ed,\mu\nu} h_{\mu\nu} = \Box h^\rho_\rho - \del^\mu\del^\nu h_{\mu\nu},
\end{equation}
which vanishes once transverse--traceless gauge is imposed below.

The equations of motions \eqref{eq:heq} and \eqref{eq:chieq} both contain $\mathcal{K}^{ab,\mu\nu}$; in \eqref{eq:chieq} the internal indices $ab$ are free, in \eqref{eq:heq} the spacetime
indices $\mu\nu$  are free. As we will see, this innocuous-looking difference flips the
sign of the parity-odd part of this operator when applied to symmetric, transverse and traceless tensors. 


\subsection{Reduction to TT gauge}

The equations of motion will become  more transparent after we subject the metric perturbations to transverse--traceless (TT) gauge:
\begin{equation}\label{eq:TTgauge}
h_{0\mu}=0,\qquad \del_a h_{ab}=0,\qquad h_{aa}=0
\qquad\Longrightarrow\qquad h=0,\;\; \del^\mu h_{\mu\nu}=0 .
\end{equation}
In the following TT tensors will mean symmetric, traverse and traceless tensor fields.

For symmetric spatial tensors we define the \emph{tensor curl}
\begin{equation}\label{eq:curl}
(\Curl X)_{ab} \equiv {\eps_{a}}^{cd}\,\del_c X_{db} ,
\end{equation}
which for transverse--traceless $X_{ab}$ is again symmetric, transverse and
traceless.\footnote{In three dimensions the antisymmetric part of $\Curl X$ is dual to
the vector $v_c = \eps_{cab}(\Curl X)_{ab} = \eps_{abc}\,\eps_{ajk}\,\del_j X_{kb}
= \del_b X_{cb} - \del_c X_{bb}$, using
$\eps_{abc}\eps_{ajk}=\delta_{bj}\delta_{ck}-\delta_{bk}\delta_{cj}$. Thus $\Curl$ maps the TT
sector to itself.} We have on TT tensors 
$\Curl^2 = -\nabla^2$.

To simplify the contractions involving ${\cal K}$ we need
\begin{lemma}[${\cal K}$-contractions in TT gauge]\label{lem:contract}
Let $X_{ab}$ be symmetric, transverse and traceless. For $
\mathcal{K}_{\pm}^{ab,\mu\nu}=\Sigma^{\pm a\,\mu\rho}\Sigma^{\pm b\,\nu\sigma}\del_\rho\del_\sigma$, 
using
$\Sigma^{\pm a\, c\nu}\del_\nu = \pm \imath\,\delta^{ac}\,\del_0 + \eps^{acd}\,\del_d$  one has:
\begin{align}
\text{(i) internal indices free:}\qquad
&{\cal K}_\pm^{ab,cd} X_{cd}
= -\Big[\ddot X + \nabla^2 X \pm 2i\,\Curl\dot X\Big]^{ab}\, ,
\label{eq:lemi}\\[1mm]
\text{(ii) spatial spacetime indices free:}\qquad
&{\cal K}_\pm^{ab,cd} X_{ab}
= -\Big[\ddot X + \nabla^2 X \mp 2i\,\Curl\dot X\Big]^{cd}.
\label{eq:lemii}
\end{align}
\end{lemma}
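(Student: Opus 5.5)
The plan is to reduce everything to the first-order operator $D_\pm^{ac}\equiv\Sigma^{\pm a\,c\nu}\del_\nu=\pm\imath\,\delta^{ac}\del_0+\eps^{acd}\del_d$ quoted in the statement. Since $X$ has only spatial components, in both contractions only the spatial values $\mu=c$, $\nu=d$ of the spacetime indices contribute. Because partial derivatives commute and $\Sigma^{\pm}$ is constant on the background, the operator then factorizes as
\[
\mathcal{K}_\pm^{ab,cd}=\Sigma^{\pm a\,c\rho}\del_\rho\,\Sigma^{\pm b\,d\sigma}\del_\sigma=D_\pm^{ac}D_\pm^{bd}.
\]
First I will briefly verify the formula for $D_\pm^{ac}$ from \eqref{eq:Sigmadef}. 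Here raising the time index with $\eta^{00}=-1$ gives the sign $\pm\imath$ in front of $\del_0$, and the spatial part is $\eps^{a}{}_{jk}\delta^{jc}\delta^{k\nu}\del_\nu=\eps^{acd}\del_d$.

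For (i) I will expand $D_\pm^{ac}D_\pm^{bd}X_{cd}$ into three pieces.
\begin{itemize}
\item The time--time piece is $(\pm\imath)^2\del_0^2X_{ab}=-\ddot X_{ab}$.
\item The mixed pieces are $\pm\imath\,\del_0\big(\eps^{ace}\del_eX_{cb}+\eps^{bdf}\del_fX_{ad}\big)$. Reordering the antisymmetric indices gives $\eps^{ace}\del_eX_{cb}=-\eps^{aec}\del_eX_{cb}=-(\Curl X)_{ab}$, and likewise for the second term with $a\leftrightarrow b$. Using the footnote that $\Curl X$ is symmetric for TT $X$, the mixed pieces sum to $\mp2\imath(\Curl\dot X)_{ab}$.
\item The space--space piece is $\eps^{ace}\eps^{bdf}\del_e\del_fX_{cd}$. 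I will evaluate it with the determinant expansion of $\eps\eps$ into products of three Kronecker deltas. The terms containing $X_{cc}$ vanish by tracelessness, those containing $\del_cX_{cd}$ vanish by transversality, and the single surviving term gives $-\nabla^2X_{ab}$. Equivalently, this piece is the double curl, acting on one index from each side, and it reproduces $\Curl^2=-\nabla^2$ up to the sign fixed by the index ordering.
\end{itemize}
Collecting the three pieces gives $-[\ddot X+\nabla^2X\pm2\imath\,\Curl\dot X]^{ab}$, which is \eqref{eq:lemi}.

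For (ii) the computation is identical, except that now the internal index $a$ of $D^{ac}$ is the one contracted with $X$. This gives $\eps^{acd}\del_dX_{ab}=+(\Curl X)_{cb}$ without the reordering sign. So the time--time and space--space pieces are unchanged, while the mixed term flips sign to $\pm2\imath\,\Curl\dot X$, which yields \eqref{eq:lemii}. This sign flip of the parity-odd part is exactly the asymmetry the paper highlights.

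The main obstacle is purely sign bookkeeping. It enters in three places: the Lorentzian raising of the index $0$, the orientation convention $\eps^{0123}=+1$ versus the spatial $\eps_{123}=+1$, and which slot of $\eps$ the derivative occupies relative to the definition \eqref{eq:curl} of $\Curl$. I would fix these by checking a single component: take a circularly polarized TT plane wave along $x^3$ and evaluate both sides explicitly.
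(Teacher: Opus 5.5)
Your proposal is correct and is essentially the paper's proof. Like the paper, you factor $\mathcal{K}_\pm^{ab,cd}$ into the product of the two first-order operators $\pm\imath\,\delta^{ac}\del_0+\eps^{acd}\del_d$, and you get $-\ddot X-\nabla^2X$ from the $\del_0^2$ and double-$\eps$ pieces in both cases, while the cross terms change sign according to which slot of $\eps$ is contracted with $X$. Your concern about the orientation convention $\eps^{0123}=+1$ is unnecessary, since only the three-dimensional symbol enters $\Sigma^{\pm}$ in this computation.
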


\begin{proof}
Expand $( \pm \imath\,\delta^{ac}\,\del_0 + \eps^{acg}\,\del_g)(\pm \imath\,\delta^{bd}\,\del_0 + \eps^{bdf}\,\del_f)$ 
acting on $X_{cd}$ (case $(i)$) or $X_{ab}$ (case $(ii)$).
The $\del_0^2$ terms give $-\ddot X$ in both cases. The double-$\eps$ terms give, after
using the determinant identity for $\eps^{acg}\eps^{bdf}$ together with transversality and
tracelessness of $X$, the result $-\nabla^2 X$ in both cases. 

The cross terms differ:
in case $(i)$ the $\eps$ of each $\Sigma$ is contracted as follows $\eps^{acd}\del_d X_{c}^b = -(\Curl X)^{ab}$,
producing $\mp 2i\,\Curl\dot X$; in case $(ii)$ the contraction is
$\eps^{acd}\del_d X_{a}^b = +(\Curl X)^{cb}$, producing  $\pm 2i\,\Curl\dot X$ (note that $X$ is symmetric and TT and hence $\Curl X$ is also symmetric).
\end{proof}

~\\

We  therefore introduce the chiral wave operators
\begin{equation}\label{eq:Ddef}
\D^{\pm} \equiv \del_0^2 + \nabla^2 \pm 2\imath\,\Curl\,\del_0 .
\end{equation}
Note the relative $+$ between $\del_t^2$ and $\nabla^2$.  On TT fields,
using $\Curl^2=-\nabla^2$ we have:
\begin{equation}\label{eq:DDbox}
\boxed{\;\D^{\mp}\D^{\pm}
= (\del_t^2+\nabla^2)^2 + 4\,\Curl^2\del_t^2
= (\del_t^2-\nabla^2)^2 = \Box^2\;}
\end{equation}

With the chiral wave operates we can express the equations of motions in TT gauge as follows. 

The spatial components of the $h$-equation (\ref{eq:heq}) reduce in TT gauge to
\begin{equation}\label{eq:hTT}
\boxed{\;
\Box h_{ab} = \frac12\sum_\pm \alpha_\pm\,(\D^{\mp}\chi^{\pm})_{ab} \, .
\:}
\end{equation}

For the $(0\mu)$ components of the $h$-equation (\ref{eq:heq}) we notice that with TT gauge
$G^{\rm lin}_{00}=G^{\rm lin}_{0a}=0$. So the corresponding components of \eqref{eq:heq} turn into
conditions for the longitudinal modes of $\chi^\pm$:
\begin{equation}\label{eq:constraints}
\sum_\pm\alpha_\pm\,\del^a\del^b\chi^\pm_{ab}=0,
\qquad
\sum_\pm\alpha_\pm\Big[\del^b\dot\chi^\pm_{ba}
\mp \imath\,{\eps_{a}}^{cd}\del_c\big(\del^b\chi^\pm_{bd}\big)\Big]=0 .
\end{equation}

The $\chi$-equation (\ref{eq:chieq}) reads now as
\begin{equation}\label{eq:chiTT}
\boxed{\;
\big(\Box-m^2\big)\chi^\pm_{ab} = \alpha_\pm\,(\D^{\pm}h)_{ab}.
\;}
\end{equation}
Since $\D^\pm$ preserves the TT property we can restrict to
\emph{transverse} $\chi^\pm_{ab}$ (that is $\del_a\chi^\pm_{ab}=0$), which we do from now on. (Even without imposing TT gauge on the metric perturbations, one can show that  $\D^\pm$ preserves the space of longitudinal modes and thus the three longitudinal-traceless $\chi$-modes decouple
from the TT mode dynamics.) Using TT gauge  equation (\ref{eq:chiTT}) shows that the longitudinal modes are massive. For the remainder of this work we restrict to the case that the longitudinal modes are vanishing, which is consistent with (\ref{eq:constraints}). Thus  $\chi^\pm$ is symmetric, traceless (by definition) and transverse.

Solving the $\chi$-equation by
$\chi^\pm = \alpha_\pm(\Box-m^2)^{-1}\D^\pm h$ (plus homogeneous solutions, which
constitute the massive branch found below), and inserting into the $h$-equation, the identity
\eqref{eq:DDbox} and the shift symmetry condition $\alpha_+^2+\alpha_-^2=2$ gives
\begin{equation}
\Box h = \tfrac12\Big(\alpha_+^2+\alpha_-^2\Big)(\Box-m^2)^{-1}\Box^2 h
\quad\Longrightarrow\quad     (\Box-m^2)\Box h = \Box^2 h  \quad\Longrightarrow\quad 
m^2\,\Box\,h_{ab}=0 .
\end{equation}
Thus, the shift symmetry condition $\alpha_+^2+\alpha_-^2=2$ is essential in avoiding an extra spin-2 pole, and leads to a ghost-free propagator for the effective action for the metric perturbations, see \cite{Borissova:2022clg,BDK}. 
For generic couplings, with 
$\rho\equiv\tfrac12(\alpha_+^2+\alpha_-^2)=\alpha_1^2-\alpha_2^2$, the same manipulation
gives $\big[(1-\rho)\Box-m^2\big]\Box h=0$: an additional pole at
$\Box = m^2/(1-\rho)$, which runs to infinity exactly on the shift-symmetry surface
$\rho=1$.

\subsection{Expansion into modes and solutions}\label{sec:modes}

To solve the equation of motion we use an expansion in Fourier modes $X_{cd}(t,x)=\int d\omega d^3k \,e^{\imath(\vec{k}\cdot \vec{x}-\omega t)} X_{cd}(\omega,\vec{k})$.
We introduce momentum vectors $\hat{\vec{k}}^\theta$ and $\hat{\vec{k}}^\varphi$, such that
$\qty(\frac{\vec{k}}{k}, \hat{\vec{k}}^\theta,\hat{\vec{k}}^\varphi)$, where $k=\sqrt{\vec{k}\cdot\vec{k}}$, defines a  right-handed orthonormal basis, i.e.,
\be\label{eq:kIdentities}
\hat{k}_a^\theta  \hat{k}_b^\theta \delta^{ab} = \hat{k}_a^\varphi \hat{k}_b^\varphi\delta^{ab} =1\,,\quad\quad\quad \hat{k}_a^\theta  \hat{k}_b^\varphi \delta^{ab} = \hat{k}_a^\theta  k_b \delta^{ab} =\hat{k}_a^\varphi k_b \delta^{ab}  =0\,,\quad\quad\quad \epsilon^{abc}\frac{k_{a}}{k}\hat{k}_b^\theta \hat{k}_c^\varphi = +1\,.
\ee
We can then define a helicity basis for the TT modes
\be
t_{ab}^R \equiv  \frac{1}{2}\left( \hat k^\theta_a+\imath \hat k^\varphi_a\right)\left( \hat k^\theta_b+\imath \hat k^\varphi_b\right)
\,,\quad\quad \quad t_{ab}^L \equiv  \frac{1}{2}\left( \hat k^\theta_a-\imath \hat k_a^\varphi\right)\left( \hat k^\theta_b-\imath \hat k_b^\varphi\right)\,.
\ee
An alternative choice is to use cross and plus polarization
\be\label{PlusCross}
t_{ab}^+ \equiv \frac{1}{\sqrt{2}} \qty(\hat{k}_a^\theta \hat{k}_b^\theta - \hat{k}_a^\varphi \hat{k}_b^\varphi )\,,\quad\quad \quad t_{ab}^{\cross} \equiv \frac{1}{\sqrt{2}} \qty(\hat{k}_a^\theta \hat{k}_b^\varphi + \hat{k}_a^\varphi \hat{k}_b^\theta)\, .
\ee
The curl acts on these basis elements as 
\begin{equation}\label{eq:curlaction}
\Curl\, t^{R} = +k\,t^{R},\qquad
\Curl\, t^{L} = -k\,t^{L},
\qquad
\Curl\, t^{+} = +ik\,t^{\times},\qquad
\Curl\, t^{\times} = -ik\,t^{+}.
\end{equation}
Due to the action of the tensor curl, the equations of motions decouple in the helicity basis, but mix plus and cross polarizations.

\subsubsection{Helicity basis \texorpdfstring{($R/L$)}{(R/L)}}\label{sec:helicity}

On a mode
$\propto t^{H}_{ab}e^{i(\vec k\cdot\vec x-\omega t)}$ of helicity
$H=+1$ ($R$) or $H=-1$ ($L$),
\begin{equation}
\Box \;\to\; P \equiv \omega^2-k^2,
\qquad
\D^{\pm}\;\to\; -(\omega^2+k^2)\pm 2H\omega k
= -(\omega\mp H k)^2 \equiv -\,Q^H_\pm ,
\end{equation}
and the equations of motions can now be expressed as
\begin{equation}\label{eq:scalarsys}
(P-m^2)\,\chi^{\pm,H} = -\,\alpha_\pm\,Q^H_\pm\,h^H ,
\qquad
P\,h^H = -\frac12\Big(\alpha_+\,Q^H_-\,\chi^{+,H} + \alpha_-\,Q^H_+\,\chi^{-,H}\Big).
\end{equation}
Note the crosswise pairing $\alpha_+ Q_-$, $\alpha_- Q_+$ in the $h$-equation ---
which appears due to the chirality flip $\D^\pm\to\D^\mp$ of Lemma~\ref{lem:contract}.

For each tuple $(\omega,\vec{k},H)$  equations (\ref{eq:scalarsys}) give three linear equations for three mode coefficients $h^H, \chi^{+,H}$ and $\chi^{-,H}$.  This system can be described by a $3\times 3$ matrix. For the system to admit solutions the determinant of this matrix has to vanish. Evaluating the determinant by using the two identities
\begin{equation}\label{eq:keyids}
Q^H_+Q^H_- = (\omega-H k)^2(\omega+H k)^2 = (\omega^2-k^2)^2 = P^2,
\qquad
\tfrac12\big(\alpha_+^2+\alpha_-^2\big) = \alpha_1^2-\alpha_2^2 = 1,
\end{equation}
we see that the determinant facorizes
\begin{equation}\label{eq:det}
-\,m^2\,P\,(P-m^2) = 0 .
\end{equation}
 Hence, \emph{per helicity},
\begin{equation}\label{eq:spectrum}
\boxed{\;
P=0:\;\; \omega = \pm k
\qquad\text{and}\qquad
P=m^2:\;\; \omega = \pm\sqrt{k^2+m^2}\;}
\end{equation}

Thus, despite the indefinite kinetic and mass terms, we find a stable dynamics. That is four real frequencies per helicity, independent of the parity-violation parameter  $\xi$  (defined via $\alpha_1=\cosh(\xi),\, \alpha_2=\sinh(\xi))$.

In general, as the equations of motion can be encoded into $3\times3$ matrices, one would expect a determinant cubic in $P$ (or equivalently cubic in $\omega^2$). However, due to the shift-symmetry conditions, the cubic terms cancel each other out. This is closely related to the appearance of constraints in the canonical analysis \cite{BDK}.  We can isolate the second order time derivatives from the equations of motion, that is write the equations of motion in the form
\ba
A \,(\ddot h,\ddot\chi^+,\ddot\chi^-)^{T} = b \, ,
\ea
where $b$ contains only zeroth or first order time derivative terms and $A$ is given by
\begin{equation}\label{eq:accmatrix}
A=\begin{pmatrix}1 & \alpha_+/2 & \alpha_-/2\\[1mm] \alpha_+ & 1 & 0\\[1mm] \alpha_- & 0 & 1\end{pmatrix},
\qquad
\det A = 1-\tfrac12\big(\alpha_+^2+\alpha_-^2\big) = 1-\big(\alpha_1^2-\alpha_2^2\big),
\end{equation}
Thus $A$ has a right null vector  $A\,v=0$ for $v=(1,-\alpha_+,-\alpha_-)^{T}$.  The accelerations along this field-space direction are left undetermined by the equations of motion, and the Legendre map from velocities to momenta is not invertible as it is onto a (primary) constraint hypersurface in phase space, which fixes a combination of momenta as function of the fields.

There is also a left null vector $u^{T}A=0$ for
$u=(1,-\alpha_+/2,-\alpha_-/2)$. This describes a combination of the equations of motion which contains
no accelerations (that is $\omega^2$ terms). It is given by
\ba\label{eq:secondary}
&&\left( - 2k^2+(\alpha_+^2-\alpha_-^2)H\omega k\right) h^H+ \nn\\ 
&&\q\q\q\q\q \left(2 k^2 + m^2\right)\tfrac{1}{2}\left( \alpha_+ \chi^{+,H} + \alpha_- \chi^{-,H} \right)  + 2H\omega k  \tfrac{1}{2}\left( \alpha_+ \chi^{+,H} - \alpha_- \chi^{-,H} \right) \,=\,0 \q ,
\ea
and corresponds to the secondary constraint found in the canonical analysis of \cite{BDK}. Here $\chi^H:=\tfrac{1}{2}( \alpha_+ \chi^{+,H} + \alpha_- \chi^{-,H})$ appears without any  time derivative, and one can use the constraint (\ref{eq:secondary}) to solve for $\chi^H$. The primary constraint described above can be used to remove the momentum conjugated to $\chi^H$, leaving a reduced phase space with configuration variables $h^H$ and $\phi^H:=\tfrac{\imath}{2}( \alpha_+ \chi^{+,H} - \alpha_- \chi^{-,H})$, as found in \cite{BDK}.

~\\

Returning to the construction of solutions, we continue with describing in more detail the two solution branches.

{\bf Dressed gravitons:} Here we have $P=0$, that is $\omega=\pm k$.   The first equation (\ref{eq:scalarsys}) thus amounts to 
\ba
\chi^{\pm,H} = \frac{\alpha_\pm Q^H_\pm}{m^2}h^H \q .
\ea
With $Q^H_\pm=(\omega\mp Hk)^2$ we have $Q^H_+=0$ and $Q^H_-=4k^2$  if $\text{sign}(\omega H)=+1$  and  $Q^H_+=4k^2$ and $Q^H_-=0$ if $\text{sign}(\omega H)=-1$. Thus, depending on the sign, either $\chi^+$ or $\chi^-$ vanishes, and one can see that the second equation in (\ref{eq:scalarsys}) is also satisfied. We note that an $(R,+\omega)$ and an $(L,-\omega)$ mode lead to the same sign, which is consistent with the reality condition \eqref{eq:reality}.  Indeed, physical (frame-independent) helicity is encoded by $\text{sign}(\omega H)$. A graviton with positive physical helicity drags along a $\chi$-field with negative chirality and vice versa.

{\bf Massive modes:} Here we have $P=m^2$, that is $\omega=\pm\sqrt{k^2+m^2}$.
Both
$Q^H_\pm=(\omega\mp Hk)^2$ are nonzero, so the first equation of \eqref{eq:scalarsys}
forces $h^H = 0$. The second equation then relates $\chi^{+,H}$ and $\chi^{-,H}$
\begin{equation}\label{eq:massivebranch}
\alpha_+ Q^H_-\,\chi^{+,H} + \alpha_- Q^H_+\,\chi^{-,H} = 0 \, .
\end{equation}
We thus have one massive degree of freedom per helicity. It is a genuinely non-length excitation which
carries no length-metric perturbation at all.

\subsubsection{Real polarization basis \texorpdfstring{($+/\cross$)}{(+/x)}}\label{sec:pluscross}

In the real polarization basis the tensor curl \eqref{eq:curlaction} mixes the two polarizations $+$ and $\cross$. The equations of motion for these modes are now described by a $6\times 6$ matrix, with a coupling between the $+$ and $\cross$ modes.  Instead of going through these equations of motion we can construct solutions in the $(+,\cross)$ basis as superpositions of the helicity mode solutions.  To this end we note that the following transformation for the components $(X^+,X^{\cross} )$ and $(X^R,X^L)$ with respect to the two tensor bases applies
\ba
X^+ = \tfrac{1}{\sqrt{2}}(X^R+X^L)\,,\q 
X^{\cross} = \tfrac{\imath}{\sqrt{2}} (X^R-X^L)\,.
\ea
In order to work with real fields, we replace $\chi^\pm=\chi_1\pm \imath \chi_2$ with $(\chi_1,\chi_2)$ and remind the reader of the definition $\alpha_\pm=\alpha_1\pm\imath\alpha_2=\cosh(\xi)\pm\imath\sinh(\xi)$ for the parity violation parameter $\xi$. 

For a plus polarized gravitational wave with $h^+=A$  and $h^{\cross}=0$ and $\omega=+k$ one then finds 
\begin{equation}\label{eq:chi12dress}
\big(\chi_1^{+},\,\chi_2^{+},\,\chi_1^{\times},\,\chi_2^{\times}\big)
=\frac{2k^2A}{m^2}\,\big(\cosh\xi,\ \sinh\xi,\ \sinh\xi,\ -\cosh\xi\big) \, .
\end{equation}
Thus a plus polarized gravitational wave comes with plus and cross polarized $\chi$-excitations.  In general, a length metric perturbation $h_{ab}=At^P_{ab} \exp(\imath(\vec{k}\cdot \vec{x}-\epsilon k t))$ with $P=+,\cross$ and $\epsilon=\pm 1$ comes with the following $\chi$-amplitudes:
\begin{equation}\label{eq:chi12gen}
\big(\chi_1^{+},\,\chi_2^{+},\,\chi_1^{\times},\,\chi_2^{\times}\big)
=\frac{2k^2A}{m^2}\times
\begin{cases}
\big(\cosh\xi,\ \sinh\xi,\ \epsilon\sinh\xi,\ -\epsilon\cosh\xi\big), & P=+,\\[1mm]
\big(-\epsilon\sinh\xi,\ \epsilon\cosh\xi,\ \cosh\xi,\ \sinh\xi\big), & P=\times .
\end{cases}
\end{equation}

\subsection{Remarks on negative energy terms}

The action \eqref{eq:action} featured  the `wrong sign' for half of the 
 kinetic, gradient and mass terms for the $\chi$-fields: in terms of the real fields, $\chi^1$
carries positive-definite and $\chi^2$ negative-definite energy.  Let us discuss the fate of these negative energy terms for solutions. For plane waves $\chi^{1,2}_{ab}=\mathrm{Re}\big[X^{1,2}_{ab}\,e^{\imath(kz-\omega t)}\big]$
the time-averaged energy density of the $\chi$-sector is
\be\label{ChiEnergy}
\langle\rho_\chi\rangle
=\tfrac14\big(\omega^2+k^2+m^2\big)\sum_{ab}\big(|X^{1}_{ab}|^{2}-|X^{2}_{ab}|^{2}\big) \q .
\ee
For the dressed gravitons this difference vanishes exactly: the dressing
\eqref{eq:chi12gen} comes with the property  $\sum_{ab}|X^{1}_{ab}|^{2}=\sum_{ab}|X^{2}_{ab}|^2\propto\cosh^{2}\xi+\sinh^{2}\xi$. The vanishing of (\ref{ChiEnergy}) extends to arbitrary superpositions (due to orthogonality of the polarization basis)
and the energy of a dressed graviton is only corrected by the $h$--$\chi$ cross couplings at relative order
$(k/m)^{2}$. 

The massive modes behave differently:
 there $h_{ab}\equiv0$, the
energy is purely of $\chi$-type.  With \eqref{eq:massivebranch} we have $X^-=-\frac{\alpha_+ Q_-^H}{\alpha_- Q_+^H}X^+$ for positive and negative helicity $H$ and thus
\ba
\langle\rho_\chi\rangle
=- \tfrac14\big(\omega^2+k^2+m^2\big) \frac{Q^H_-}{Q^H_+} \text{Re}\left( \frac{\alpha_-}{\alpha_+}\right)\sum_{ab}|X^+_{ab}|^2
\ea
with $\text{Re}(\alpha_+/\alpha_-)=1/\cosh(2\xi)$ and $Q^H_\pm$ being positive, so that these excitations carry
strictly \emph{negative} energy, for both helicities.  

We contrast these findings to \cite{Boulware:1983td}, 
which considered purely quadratic, scale-invariant gravity. The work \cite{Boulware:1983td} shows that an exact zero-energy theorem holds for isolated systems. For area metric actions scale invariance is broken by the Einstein-Hilbert term and by the mass $m$ for the non-length degrees of freedom. The graviton carries (modulo a $(k/m)^{2}$ correction) the standard positive energy, and the indefiniteness is confined to the gapped (Planck-) massive branch.

Importantly, at the linearized level the
negative-energy branch does not affect the stability of the dynamics. The negative energy terms are even tied to the mechanism that protects the graviton. For the interacting level it raises the usual concerns. For a Planckian mass we are outside the regime of effective descriptions, and the question passes to the ultraviolet completion.

\subsection{Violating the shift symmetry condition}

Let us consider a relaxation of the shift symmetry condition \eqref{eq:shiftcond}, that is
$\rho\equiv\alpha_1^2-\alpha_2^2\neq1$. 
The equations of motion still decouple in the helicity basis and the determinant of the linear equation system
\eqref{eq:scalarsys} is
\ba\label{eq:detrho}
\det \;\propto\; -P(P-m^2)((1-\rho)P-m^2)
\ea
The massless graviton branch ($P=0$) and the massive $\chi$ branch ($P=m^2$) are therefore not changed: the graviton remains exactly massless for any $\rho$, with the
identical dressing $\chi^\pm=\alpha_\pm Q^H_\pm h/m^2$, and the massive branch remains at $P=m^2$ with
$h_{ab}\equiv0$. Also, the vanishing of the $\chi$-energy of the dressing in the graviton branch
persists.  

Thus, the main effect of violating the shift-symmetry condition is the appearance of a third root $P_*=m^2/(1-\rho)$.
Unlike the  massive $\chi$ branch at $P=m^2$, this branch does carry  a non-vanishing length-metric perturbation. Integrating out the $\chi$-perturbations we can find the propagator $G_{\text{eff}}$ for the effective length metric action (see also \cite{BDK}), to be given as
\ba\label{Effhprop}
G_{\text{eff}}&=& \frac{1}{P}+\frac{\rho/(1-\rho)}{P-m^2/(1-\rho)}  \, ,
\ea 
 with the massless graviton pole and the pole at $P=P_*=m^2/(1-\rho)$. (The massive $\chi$ branch does not involve length metric perturbations and therefore does not appear in the effective propagator.)
 
Assuming $m\sim M_{\text{Planck}}$, this leads to different regimes:
\begin{itemize}
\item For $\rho>1$ the extra branch is tachyonic,
$\omega^2=k^2-m^2/(\rho-1)$, with growing modes of maximal rate $m/\sqrt{\rho-1}$.  This instability lies within the validity of the effective theory only for $\rho\gtrsim2$, where it proceeds at a Planckian rate and is excluded by the persistence of a long-lived vacuum
\item   For $0<\rho<1$ the extra branch is a massive
length-metric mode of mass $m/\sqrt{1-\rho}\geq m$, i.e. super-Planckian near
the shift point, whose only low-energy trace is a contact correction of order
$\rho/m^2$ to the graviton propagator --- unobservably close to the shift-symmetric
theory.
\item  For the extra branch to be light we would require $\rho\ll-1$, but (\ref{Effhprop}) shows that its residue is $\rho/(1-\rho)$ and negative. We would thus have a genuine massive length metric spin-2 ghost, largely excluded because it would lead to e.g. additional radiation channels for binaries and Yukawa corrections to the Newtonian potential of range $\sqrt{1-\rho}/m$.
\end{itemize} 
 Assuming $m\sim M_{\text{Planck}}$,  the shift symmetry condition is thus observationally safe on both sides: large violations are ruled
out, small violations are invisible at accessible energies.  We note that the area metric action constructed from modified Plebanski theory and from the continuum limit of effective spin foams satisfy $\rho=1$ by construction rather than tuning.  Conversely, the phenomenology developed in this work  --- exact masslessness of the graviton branch, helicity degeneracy, the dressing and its energetics, and the interferometer polarization channel discussed below --- does not hinge on the
shift symmetry condition.

\section{Electro-magnetic response}\label{sec:EM}

\subsection{Coupling of electro-magnetism to area metrics}

We have seen that the propagation of the length-metric perturbation itself remains mostly unaffected by the presence of the non-length degrees of freedom.  But non-vanishing gravitons lead to non-vanishing non-lengths excitations $\chi^\pm$.  One might be able to probe the presence of such excitations via the electromagnetic field. 

Area metrics couple naturally to electromagnetism, the field strength $F_{\mu\nu}$ is a two-form, and the Maxwell action
requires precisely a metric on bivectors. We therefore extend the action
\eqref{eq:action} by the constitutive (premetric) Maxwell term
\begin{equation}\label{eq:SEM}
S_{\rm EM}[A;G]=-\frac18\int d^4x\;\mathcal G^{\mu\nu\rho\sigma}\,F_{\mu\nu}F_{\rho\sigma}\,,
\qquad F_{\mu\nu}=\del_\mu A_\nu-\del_\nu A_\mu\,,
\end{equation}
where the constitutive tensor $\mathcal G^{\mu\nu\rho\sigma}$ is the density-weighted
\emph{inverse} of the area metric on bivectors, normalized such that for an induced area
metric $G^{\rm ind}(g)$ one has
$\mathcal G^{\mu\nu\rho\sigma}=\sqrt{-g}\,\big(g^{\mu\rho}g^{\nu\sigma}-g^{\mu\sigma}g^{\nu\rho}\big)$,
so that \eqref{eq:SEM} reduces to the Maxwell action
$-\tfrac14\int\sqrt{-g}\,F_{\mu\nu}F^{\mu\nu}$. This is the setting of premetric (local
and linear) electrodynamics \cite{Obukhov:2000nw,Hehl:2004zz} and of constructive gravity \cite{Punzi:2007di,Schneider:2017tuk}.

Expanding to first order in the perturbation \eqref{eq:pertexp}: since $\mathcal G$
inverts $G$ on bivectors, the perturbation enters with a minus sign relative to a
naive raising of indices,
$\mathcal G^{\mu\nu\rho\sigma}
=\eta^{\mu\rho}\eta^{\nu\sigma}-\eta^{\mu\sigma}\eta^{\nu\rho}-a^{\mu\nu\rho\sigma}
+(\text{trace terms})$,
where $a^{\mu\nu\rho\sigma}$ is raised with $\eta$ and the density weight contributes
only trace terms, irrelevant for transverse--traceless $h$ and the trace-free
$\chi^\pm$. Hence
\begin{equation}\label{eq:LEMlin}
\Lag_{\rm EM}=-\frac14\,F_{\mu\nu}F^{\mu\nu}
+\frac18\,a^{\mu\nu\rho\sigma}F_{\mu\nu}F_{\rho\sigma}\, .
\end{equation}
The variation of this action with respect to $A_\mu$ leads to the inhomogeneous Maxwell equation
\begin{equation}\label{eq:Maxmod}
\del_\mu\Big(F^{\mu\nu}-\tfrac12\,a^{\mu\nu\rho\sigma}F_{\rho\sigma}\Big)=0 \q .
\end{equation}
The electromagnetic field acts also as a source for the area metric, this happens however via terms quadratic in the electromagnetic field strength, affecting $\chi$ by an amount $\sim F^2/(M_{\text{Planck}}m)^2$. We will neglect this effect and rather assume a regime where the electromagnetic field acts as a probe for the area metric.

We now evaluate the coupling term $a F F$ in  \eqref{eq:LEMlin} more explicitly. With our conventions we have 
$E_a=F_{a0}$ and $B_a=\tfrac12\eps_{abc}F_{bc}$. The metric perturbation is
\begin{equation}\label{eq:apert}
a_{\mu\nu\rho\sigma}
\;=\;
h_{\mu\rho}\,\eta_{\nu\sigma}
+\eta_{\mu\rho}\,h_{\nu\sigma}
-h_{\mu\sigma}\,\eta_{\nu\rho}
-\eta_{\mu\sigma}\,h_{\nu\rho}\;+\;\sum_\pm w^{\pm}_{\mu\nu\rho\sigma}\,,
\q\q
w^{\pm}_{\mu\nu\rho\sigma}
=\frac12\,\Sigma^{\pm a}{}_{\mu\nu}\,\Sigma^{\pm b}{}_{\rho\sigma}\,\chi^{\pm}_{ab}\,,
\end{equation}
Together with $\chi^\pm=\chi_1\pm\imath\chi_2$ we obtain
\begin{equation}\label{eq:EMcoupling}
\frac18\,a^{\mu\nu\rho\sigma}F_{\mu\nu}F_{\rho\sigma}
= -\,\tfrac12\,h^{ab}\big(E_aE_b+B_aB_b\big)
-\tfrac12\,\chi_{1}^{ab}\big(E_aE_b-B_aB_b\big)
-\tfrac12\,\chi_{2}^{ab}\,\big(E_aB_b+B_aE_b\big)
\end{equation}
The three symmetric trace-free fields $h, \chi_1,\chi_2$  couple to the three independent
quadrupolar quadratic forms of the electromagnetic field: $h$ to the stress
(energy-momentum) quadrupole $E_aE_b+B_aB_b$; $\chi_1$ to the
Lagrangian-type, parity-even quadrupole $E_{a}E_{b}-B_{a}B_{b}$; and $\chi_2$ to the
Pontryagin-type, parity-odd quadrupole $E_{(a}B_{b)}$. In this sense $\chi_1$ and
$\chi_2$ are the spin-2 analogues of a dilaton coupling $\varphi\,F_{\mu\nu}F^{\mu\nu}$
and an axion coupling $\theta\,F_{\mu\nu}\tilde F^{\mu\nu}$, respectively.

\subsection{Optical effects of area metric perturbations}

To discuss the optical effects of $\chi_1$ and $\chi_2$  (which in this section, to improve readability,  we will denote by $\chi^1$ and $\chi^2$) we introduce the excitation fields 
\begin{equation}\label{eq:DH}
D_a\equiv\frac{\del\Lag_{\rm EM}}{\del E_a}
=\sum_b\varepsilon_{ab}E_b-\sum_b \chi^2_{\,ab}B_b\,,
\qquad
H_a\equiv-\frac{\del\Lag_{\rm EM}}{\del B_a}
=\sum_b (\mu^{-1})_{ab}B_b+\sum_b \chi^2_{\,ab}E_b\,,
\end{equation}
where 
\begin{equation}\label{eq:constitutive}
\varepsilon_{ab}=\delta_{ab}-h_{ab}-\chi^1_{\,ab},\qquad
(\mu^{-1})_{ab}=\delta_{ab}+h_{ab}-\chi^1_{\,ab} \, .
\end{equation}
The  Maxwell equations take the familiar form of electrodynamics in
a medium,
\begin{equation}\label{eq:maxwellset}
\underbrace{\;\vec\nabla\cdot\vec B=0\,,\qquad
\vec\nabla\times\vec E=-\del_t\vec B\;}_{\text{homogeneous}}\,,
\qquad\qquad
\underbrace{\;\vec\nabla\cdot\vec D=0\,,\qquad
\vec\nabla\times\vec H=\del_t\vec D\;}_{\text{inhomogeneous}}\,.
\end{equation}
with the homogeneous pair following from the definition $F_{\mu\nu}=\del_\mu A_\nu-\del_\nu A_\mu$ and  the inhomogeneous pair results from the field equation \eqref{eq:Maxmod}.

 Consider a photon with the plane-wave ansatz
$(\vec E,\vec B)(t,\vec{x}) =  (\vec E,\vec B)(\hat n, \omega_\gamma)\times e^{\imath(q\,\hat n\cdot\vec x-\omega_\gamma t)}$, propagating in
the direction $\hat n$, with $\omega_\gamma$ far above the frequencies of the
background fields, which can therefore be taken constant over many optical cycles.
Choose a right-handed orthonormal frame $(\hat n,\hat e_1,\hat e_2)$ and introduce the
circular polarization vectors
\begin{equation}\label{eq:circbasis}
e_\pm=\tfrac{1}{\sqrt2}\big(\hat e_1\pm\imath\,\hat e_2\big)\,,
\qquad
\hat n\times e_\pm=\mp\,\imath\,e_\pm\,,
\qquad
e_+ \cdot e_-=1\,,\qquad e_\pm\cdot e_\pm=0\,.
\end{equation}

Decompose
$\vec E=E_+\,e_+ + E_-\,e_- + E_{\hat n}\,\hat n$ and likewise $\vec B$.  Note that $X_\pm=e_\mp \cdot \vec{X}$.
 For given photon frequency $\omega_\gamma$,
the quantities to be determined are the wavenumber $q$ and the amplitudes of $\vec E$
and $\vec B$ as function of $\hat n$ and $\omega_\gamma$.

With our ansatz, $\vec\nabla\to\imath q\,\hat n$ and $\del_t\to-\imath\omega_\gamma$,
so the four Maxwell equations \eqref{eq:maxwellset}, together with \eqref{eq:DH}, amount to the algebraic conditions
\ba\label{eq:algebraic}
&\hat n\cdot\vec B=0\,,\qquad
&q\,\hat n\times\vec E=\omega_\gamma\,\vec B\,,\nn\\
& \hat n\cdot (\varepsilon\cdot \vec{E} -(\chi^2)\cdot \vec{B}) =0\,,\qquad
&q\,\hat n\times((\mu^{-1})\cdot \vec{B}+(\chi^2)\cdot \vec{E})=-\,\omega_\gamma\,(\varepsilon\cdot \vec{E} -(\chi^2)\cdot \vec{B})\,.
\ea
The second condition determines the magnetic amplitude completely:
\begin{equation}\label{eq:Bcirc}
B_\pm=\mp\,\imath\,\frac{q}{\omega_\gamma}\,E_\pm\,,\qquad B_{\hat n}=0\,,
\end{equation}
with the first condition being redundant. 

The third equation in (\ref{eq:algebraic}) (together with equation (\ref{eq:Bcirc})) fixes the
longitudinal electric amplitude in terms of $E_\pm$:
\ba
(1-(h+(\chi^1))_{\hat n \hat n} )E_{\hat n} = \sum_{S=\pm} (h+\chi^1)_{\hat n S} E_S -\sum_{S=\pm} S\frac{q}{\imath \omega_\gamma} (\chi^2)_{\hat n S} E_S \q .
\ea
We note that $E_{\hat n}$ is at least first order in area metric perturbations. 

The $\hat n$-component of the last condition in (\ref{eq:algebraic}) is satisfied due to the third condition. For the $\pm$ components of the last condition we note that $E_{\hat n}$, which is of first order, enters always multiplied by another area metric perturbation. We will neglect the resulting second order terms and, after using (\ref{eq:Bcirc}), obtain two equations only involving $E_\pm$. A straightforward calculation gives\footnote{Our definition for the components of a $2\times 2$ matrix $X$ is $X\cdot e_+=X_{-+}e_++X_{++}e_-$ and $X\cdot e_-=X_{--}e_++X_{+-}e_-$.}
\ba
{\scriptstyle \left[   
(\omega_\gamma^2-q^2)(1-\chi^1_{-+})\!-\!(\omega_\gamma^2+q^2)h_{-+} 
  \right]}E_+&+&
  {\scriptstyle\left[-(\omega_\gamma^2-q^2)h_{--}\!-\!(\omega_\gamma^2+q^2)\chi^1_{--}\!  -2\imath q\omega_\gamma \chi^2_{--}   \right]}E_- \,\simeq\,0 \, , \nn\\
{\scriptstyle \left[
-(\omega_\gamma^2-q^2)h_{++}\!-\!(\omega_\gamma^2+q^2)\chi^1_{++}
+2\imath q\omega_\gamma \chi^2_{++} 
\right]}E_+&+&
{\scriptstyle \left[
(\omega_\gamma^2-q^2)(1-\chi^1_{+-})\!-\!(\omega_\gamma^2+q^2)h_{+-} 
\right]}E_ -  \,\simeq\,0\, .
\ea
where $\simeq$ denotes equality up to second order term.  We now utilize that $q = \omega_\gamma$ modulo first order terms. This leads to the following simplification
\ba
{\scriptstyle \left[   
(\omega_\gamma^2-q^2)- 2\omega_\gamma^2 h_{-+} 
  \right]}E_+  &+&
  {\scriptstyle\left[-\! 2\omega_\gamma^2\chi^1_{--}\!  -2\imath \omega_\gamma^2 \chi^2_{--}   \right]}E_- \,\simeq\,0 \, , \nn\\
{\scriptstyle \left[
-2\omega_\gamma^2\chi^1_{++}
+2\imath \omega_\gamma^2 \chi^2_{++} 
\right]}E_+  &+&
{\scriptstyle \left[
(\omega_\gamma^2-q^2)- 2\omega_\gamma^2 h_{+-} 
\right]}E_ -  \,\simeq\,0\, .
\ea
We have for a symmetric and traceless $3\times 3$ matrix $X_{ab}$
\ba
&X_{-+}=\tfrac{1}{2}(X_{11}+X_{22})=-\tfrac{1}{2}X_{\hat n \hat n} \, ,\q  & X_{--}=\tfrac{1}{2}(X_{11}-X_{22}-2\imath X_{12}) \, ,\nn\\
&X_{++}=\tfrac{1}{2}(X_{11}-X_{22}+2\imath X_{12}) \,, \q & X_{+-}=\tfrac{1}{2}(X_{11}+X_{22})=-\tfrac{1}{2}X_{\hat n \hat n}\, ,
\ea
and thus 
\ba\label{EM59}
 \left[
(\omega_\gamma^2-q^2)+ \omega_\gamma^2 h_{\hat n\hat n} 
  \right] E_+\, \simeq\,
      2\omega^2_\gamma \chi^+_{--}  \,\,E_-  \, , \q\q
\left[
(\omega_\gamma^2-q^2)+ \omega_\gamma^2 h_{\hat n\hat n} 
  \right]E_- \,\simeq\,    2\omega^2_\gamma \overline{(\chi^+_{--})}  \,\,E_+
\ea
where $\chi^+_{--}= \chi^1_{--}+\imath \chi^2_{--}=\tfrac{1}{2}(\chi^1_{11}- \chi^1_{22} -2 \imath \chi^1_{12}  +\imath \chi^2_{11}- \imath \chi^2_{22} +2  \chi^2_{12} )$.  Above we have used one of the reality condition $\overline{\chi^+_{--}}=\chi^-_{++}$, which the $\chi^\pm$ have to satisfy.

Multiplying the two
equations
gives the dispersion relation (to second order)
\ba
\big(\omega_\gamma^2-q^2+\omega_\gamma^2 h_{\hat n\hat n}\big)^2
=4\omega_\gamma^4\,| \chi^+_{--}|^2
\ea
whose two roots $q_\pm$ define the two \emph{eigen-indices} $n_\pm=q_\pm/\omega_\gamma$. To first order we obtain\footnote{Here we chose to parameterize the eigen-index difference with $\chi^+$. We could have also chosen $\chi^-$ as these are connected by $\overline{\chi^+_{ab}}=\chi^-_{ab}$, and in particular $\overline{\chi^+_{--}}=\chi^-_{++}$.}
\ba\label{eq:eigenindex}
n_\pm \,=\, 1+\tfrac{1}{2} h_{\hat n \hat n} \pm | \chi^+_{--}| \, .
\ea

For each of the two roots $n_P$ with $P=\pm$, the equations \eqref{EM59} also define eigen-polarizations $(E^P_+,E^P_-)$. These are the only polarizations that propagate without changing their
polarization along their path.
Inserting $\omega_\gamma^2-q_\pm^2+\omega_\gamma^2h_{\hat n\hat n}
=\pm2\omega_\gamma^2|\chi^+_{--}|$ into the first equation of \eqref{EM59} gives
\ba\label{eq:eigenpol}
E^P_-\,=\,P e^{-\imath\arg(\chi^+_{--})}\,E^P_+\, , \q P=\pm\, .
\ea
The two circular components of an eigen-mode thus have equal magnitude, which
characterizes a linear polarization, and we have the case of pure birefringence.\footnote{This would change if we would not impose the reality conditions on $\chi^\pm$. For general complex  $\chi^\pm$ the eigen-indices can become complex and result in amplification and damping of the amplitudes.} In the linear polarization basis we have $(E_1^P,E_2^P)\propto (\cos\varphi^P, \sin\varphi^P)$ with 
\ba\label{eq:psi}
\varphi^-=\psi \q\text{and}\q \varphi^+=\psi+\pi/2,
\q\q\text{with}\q\q
\psi\,\equiv\,-\tfrac12\arg\big(\chi^+_{--}\big)\,.
\ea
The first eigen-polarization propagating with the smaller index $n_-$ (the ``fast axis''), the second with
$n_+$ (the ``slow axis'').  That is,  general $\chi^\pm$-fields break rotation invariance around $\hat n$ and lead to a preferred direction. We remind the reader, that  for the massless area metric excitations the $\chi$ fields are 'slaved' to the graviton field, so it is the gravitational wave background that determines the preferred directions.
The two eigen-polarizations are the vacuum analogue of the ordinary and extraordinary rays of a
birefringent crystal, with axes depending on the photon direction $\hat n$.

For the $\chi$-dressed gravitational wave the two observables --- the magnitude $\Delta n=n_+-n_-$ and the
axis angle $\psi$ --- take a simple form. Consider the plus-polarized wave of
\eqref{eq:chi12dress}, travelling in the $\hat z$-direction with length-metric
amplitude $h_0\equiv h_{\hat x \hat x}=A/\sqrt2$, and a photon travelling along $\hat x$, so that
the transverse frame is $(\hat e_1,\hat e_2)=(\hat y,\hat z)$. The only nonvanishing
transverse components in \eqref{eq:chi12dress} are
$\chi^1_{\hat y\hat y}=-\sqrt2\,(k^2/m^2)A\cosh\xi$ and
$\chi^2_{\hat y\hat y}=-\sqrt2\,(k^2/m^2)A\sinh\xi$, and one finds
\ba\label{eq:gravbiref}
\chi^+_{--}\,=\,-\,\alpha_+\,\frac{k^2}{m^2}\,h_0
\q\q\Longrightarrow\q\q
&&\Delta n \,=\, 2\sqrt{\cosh(2\xi)}\,\Big(\frac{k}{m}\Big)^{2} h_0\,,
\nn\\
&&\psi\,=\,-\tfrac12\arctan(\tanh\xi)\ \ (\text{mod }\pi/2)\,,
\ea
using $|\alpha_+|=\sqrt{\cosh2\xi}$ and $\arg(-\alpha_+)=\pi+\arctan(\tanh\xi)$. The
parity structure thereby separates cleanly into the two observables: the parity-even
content enhances the \emph{magnitude} of the birefringence by $\sqrt{\cosh2\xi}$, while
the parity-violating (Barbero--Immirzi--type) parameter $\xi$ appears as the
\emph{rotation} $\psi$ of the birefringence axes away from the projected polarization
axes of the gravitational wave: at $\xi=0$ the axes coincide with $(\hat y,\hat z)$.

\subsection{Remarks on observational signature}

To explain how these effects could be observed, we briefly recall the working principle
of a gravitational-wave interferometer. The instrument consists of two orthogonal arms of
length $L$ (four kilometres for LIGO) with mirrors at their ends. The mirrors are
suspended as pendula; at the measurement frequencies, roughly $10$--$10^3\,$Hz and thus
far above the pendulum resonance, they respond to gravity as freely falling test
masses in the horizontal plane, i.e.\ they follow geodesics of the length
metric. Laser light is stored in each arm --- optical cavities let it traverse the arm
effectively a few hundred times, so the effective path is $L_{\rm eff}\sim10^3\,$km ---
and the light returning from the two arms is interfered at the output port; the
observable is the differential optical phase between the arms.

In the transverse--traceless gauge used throughout this paper, the mirrors remain at
fixed coordinates, and the entire signal resides in the optical phase. In the absence of non-length perturbations \eqref{eq:eigenindex} shows that the eigen-indices degenerate and the phase is thus polarization independent. The phase
accumulated along an arm in the direction $\hat n$ is
$\omega_\gamma L_{\rm eff}\,\big(1+\tfrac12 h_{\hat n\hat n}\big)$, so a plus-polarized
wave arriving from overhead produces the differential arm phase
\ba\label{eq:strainchannel}
\Delta\phi_{\rm strain}(t)
=\tfrac12\,\omega_\gamma L_{\rm eff}\,\big(h_{\hat x\hat x}-h_{\hat y\hat y}\big)
=\omega_\gamma L_{\rm eff}\,h_0(t)\,.
\ea
To provide some numbers,  a compact binary at
hundreds of megaparsecs produces a strain $h_0\sim10^{-21}$, and with
$\omega_\gamma L_{\rm eff}/c\sim10^{13}$ for infrared laser light this gives
$\Delta\phi_{\rm strain}\sim10^{-8}\,$rad.  In pure gravity this effect is (photon-)polarization blind:  a length metric defines a  single light cone for both photon polarizations, and $\Delta n$ vanishes identically.

The non-length degrees of freedom lead to a polarization dependent  differential arm phase. Within a
\emph{single} arm, light polarized along the two eigen-axes accumulates the relative
phase
\ba\label{eq:polchannel}
\Delta\phi_{\rm pol}(t)
=\omega_\gamma\,L_{\rm eff}\,\Delta n(t)
=2\,\omega_\gamma L_{\rm eff}\,\big|\chi^+_{--}\big|(t)\,,
\ea
oscillating at the gravitational-wave frequency and phase-locked to the strain signal
recorded simultaneously in the standard channel. This measurement is
polarimetry rather than interferometry between arms: one sends light of two orthogonal
linear polarizations through the same arm or a single beam whose polarization is
modulated) and reads out their differential phase, separated by polarizing
optics at the output. $\Delta\phi_{\rm strain}$ and $\Delta\phi_{\rm pol}$ are complementary projections of the same
optical field: the strain phase is common to both polarizations and cancels in the
polarization readout, while the birefringent phase cancels between polarizations in the
strain readout.

The polarization channel is a \emph{null test}: in general relativity it vanishes
identically, so a signal at any level indicates non-length geometry. Its practical
signatures distinguish it sharply from instrumental effects. Mirror coatings and
substrates are themselves birefringent at levels vastly exceeding $\Delta n$, but that
birefringence is static; the signal, by contrast, (a) oscillates at the
gravitational-wave frequency, (b) is phase-locked to the simultaneously recorded
strain, (c) carries a spectrum tilted by two powers of gravitational wave frequency relative to the
strain, $\Delta n/h_0=2\sqrt{\cosh2\xi}\,(\omega_{\rm gw}/m)^2$, and (d) depends on the
orientation of the light polarization: rotating the input polarization by an angle
$\beta$ modulates the signal as $\cos2(\beta-\psi)$, and the extracted axis angle
$\psi$, compared with the wave's polarization axes known from the strain channel,
measures the parity parameter through \eqref{eq:gravbiref}.

An instrumental precedent
for the readout is provided by axion dark-matter searches via polarization rotation in
gravitational-wave interferometers \cite{Nagano:2019rbw,Oshima:2023csb}, there for a scalar, quasi-monochromatic background producing circular birefringence, here for a tensorial, transient, strain-locked \emph{linear}
birefringence.

We point out two caveats: First, the magnitude: the ratio of the two
channels is $\Delta\phi_{\rm pol}/\Delta\phi_{\rm strain}
=2\sqrt{\cosh2\xi}\,(\omega_{\rm gw}/m)^2$, which for a Planckian mass scale $m$ and
$\omega_{\rm gw}\sim10^2\,$Hz is of order $10^{-81}$.  This channel is therefore more of a matter of
principle unless $m$ lies far below the Planck scale.  But 
a non-detection bounds $m$  independently of astrophysical
modelling. Second, the mirrors are electromagnetically bound solids, so a
$\chi$-background also perturbs atomic bonds and hence the mirror dimensions, feeding a
strain-like signal at the same $(\omega_{\rm gw}/m)^2$ order; this contribution is
material-dependent and would add to the polarization signature.

Concerning massive non-length waves their signal would be outside available frequency bands for a Planck scale mass. The signal would become meaningful only for an ultralight
non-length sector, $m\lesssim \hbar \omega_{\rm gw}\sim10^{-12}\,$eV, with waves near the threshold frequency being very slow and strongly dispersive.  A condensate of $\chi$, oscillating at $\omega\simeq m$, would
produce a persistent, narrow-line birefringence at the frequency $m/2\pi\hbar$
--- structurally analogous to axion dark-matter searches with laser interferometers,
but in the polarization channel. A
polarization-instrumented interferometer is thus, in principle, an antenna for a class
of signals invisible to every standard gravitational-wave search.


\section{Discussion and outlook}\label{sec:discussion}

In this work we developed the Lagrangian analysis of linearized, shift-symmetric
area-metric gravity in Lorentzian signature, which can include parity violating terms. The area metric excitations can be split into length metric and non-length excitations.  The plane-wave solutions organize into two
branches: an exactly length metric massless graviton, dressed by non-length excitations at relative
order $(k_{\text{grav}}/m)^2$, and a massive branch with
$h_{ab}\equiv0$, which carries strictly negative energy gapped at the (presumably
Planckian) mass $m$.  With the shift-symmetry condition imposed, the length metric excitations propagate in the same way as in (linearized) general relativity, that is without velocity or amplitude birefringence. This is different from most other parity violating gravitational theories \cite{Qiao:2022mln}, where the gravitational waves themselves feature velocity or even amplitude birefringence.

Coupling the theory to
electro-magnetism, the length perturbation produces the
familiar polarization-blind phase (the strain channel, unmodified at linear order),
while the non-length perturbations render the vacuum birefringent, with eigen-axes rotated by $\psi=-\tfrac12\arctan(\tanh\xi)$ relative to the
polarization axes of the gravitational wave, cf.~\eqref{eq:gravbiref}. Through
$\sinh(2\xi)=\gamma^{-1}$ the polarization channel is, in principle, a measurement of
the Barbero-Immirzi parameter --- although for Planckian $m$ the signal is
extremely weak, and a realistic target only if $m$ lies far below the Planck scale.

The extreme smallness of this parity violation effect leaves however room for a small Barbero-Immirzi parameter. ( Remember that $\sinh(2\xi)=\gamma^{-1}$.) In loop quantum gravity and spin foams the Barbero-Immirzi parameter acts also as a anomaly parameter \cite{Dittrich:2012rj}, one therefore expect rather a small value \cite{Magliaro:2011zz,Han:2013ina,EffSFE,Asante:2020iwm}. Arguments from black hole counting in loop quantum gravity \cite{BarberoG:2022ixy,Meissner:2004ju} suggest $\gamma=0.237\cdots$ and analysis of the renormalization flow suggest  that parity symmetric values for the couplings are IR repulsive, see \cite{Daum:2010qt,Benedetti:2011nd,Daum:2013fu} for earlier works within first order gravity and \cite{Borissova:2025frj} for a study of area metric gravity.

A key open question is the extension of (shift symmetric) area metric actions to non-linear order and to non-flat backgrounds. This can be accomplished via a construction of area metric actions from the modified Plebanski framework, along the lines of \cite{Borissova:2022clg}.  This could clarify the fate of the negative-energy massive branch.  

On a cosmological background the flat-space cancellation mechanism for parity odd terms in the effective length metric action likely fails and one expects  area-metric dynamics to generate an
amplitude asymmetry of order $\sinh(2\xi)(H_{\text{inf}}/m)^2$, where $H_{\text{inf}}$ denotes the (nearly constant) Hubble rate during
inflation, at which the physical frequencies of the tensor modes cross the horizon.  

This is complementary to the analysis in \cite{Bianchi:2024mrt}, which works with an effective length metric action including a parity even Gauss-Bonnet and a parity odd Chern-Simons term (that is higher derivative terms) and needs to also adopt a rolling scalar.  Assuming so-called $\gamma$-duality\footnote{\cite{Bianchi:2024mrt}  parametrize the ratio between the $B\wedge F$ and $\epsilon B\wedge F$ terms in the Plebanski action (\ref{PA1}) via a rotation angle $\theta$, so that $\tan(\theta)=\gamma^{-1}$ and the entire action is multiplied by $\cos(\theta)$. We think that a parametrization via a hyperbolic angle as in $\sinh(2\xi)=\gamma^{-1}$ is more appropriate in Lorentzian signature, due to the relations (\ref{eq:shiftcond}). This hyperbolic encoding results from the $\pm \imath$ eigenvalues for the Hodge star operator. Otherwise the $\gamma$-duality of \cite{Bianchi:2024mrt} and the shift symmetry condition seem to be closely related, both conditions assume an action where the parity violating terms are ruled by one parameter, $\gamma$.} imposes a relation between the Gauss-Bonnet and Chern-Simons terms, which leads to a circular-polarization asymmetry of primordial gravitational waves, which depends on $\gamma$ but also on slow-roll parameters.

In summary the enlargement of the gravitational configuration
space from length to area metrics converts a quantization ambiguity of quantum gravity
--- the Barbero-Immirzi parameter --- into a coupling of the linearized dynamics that
is, at least in principle, observable.  Different from other parity violating versions of gravity \cite{Qiao:2022mln}, shift symmetric area metric gravity does not affect the propagation of gravitational waves on a flat background.  Assuming Planck mass for the non-length excitations gives an extremely  small observable effect, but also allows for a small Barbero-Immirzi parameter, as argued for in loop quantum gravity.

\begin{acknowledgments}
	
B.~D.~thanks Johanna Borissova and Kirill Krasnov for collaboration on \cite{BDK}, which inspired the current work. B.~D.~also thanks Seth Asante, Jose Padua Arguelles, Aldo Riello for discussions and Abhay Ashtekar for correspondence.
Research at Perimeter Institute is supported in part by the Government of Canada through the Department of Innovation, Science and Economic Development Canada and by the Province of Ontario through the Ministry of Colleges and Universities.\end{acknowledgments}

\bibliographystyle{jhep}
\bibliography{references}

\end{document}